\newtheorem{assumption}{Assumption}
\newtheorem{theorem}{Theorem}
\newtheorem*{theorem*}{Theorem}
\newtheorem*{proposition*}{Proposition}
\newcommand\independent{\protect\mathpalette{\protect\independenT}{\perp}}
\def\independenT#1#2{\mathrel{\rlap{$#1#2$}\mkern2mu{#1#2}}}
\date{\today}
\title{A Bayesian Perspective on the Maximum Score Problem}
\author{Christopher D. Walker\footnote{Harvard University, Department of Economics: \href{mailto:cwalker@g.harvard.edu}{cwalker@g.harvard.edu}. This work is preliminary and any comments are welcome. I thank Elie Tamer for helpful discussion. All errors are mine.}}
\begin{document}
\maketitle
\begin{abstract}
This paper presents a Bayesian inference framework for a linear index threshold-crossing binary choice model that satisfies a median independence restriction. The key idea is that the model is observationally equivalent to a probit model with nonparametric heteroskedasticity. Consequently, Gibbs sampling techniques from \cite{albert1993bayesian} and \cite{chib2013conditional} lead to a computationally attractive Bayesian inference procedure in which a Gaussian process forms a conditionally conjugate prior for the natural logarithm of the skedastic function.
\end{abstract}
\maketitle
\section{Introduction}
The threshold-crossing binary choice model is regularly encountered in economics and other social sciences. Under the model, an observable binary outcome $Y \in \{0,1\}$ is determined by some observable covariates $X \in \mathbb{R}^{d_{x}}$ and an unobservable random variable $U \in \mathbb{R}$ via the sign of a linear index $X'\beta - U$, where $\beta \in \mathbb{R}^{d_{x}}$ is an unknown vector of parameters. That is,
\begin{align}
Y = \begin{cases} 1 &\quad \text{if $X'\beta - U \geq 0$} \\
0 & \quad \text{if $X'\beta - U<0$}
\end{cases}.
\label{eq:threshold}
\end{align}
This is a semiparametric model as the conditional distribution for $Y$ given $X$ is determined by a finite-dimensional vector of index function coefficients $\beta$ and an infinite-dimensional parameter representing the conditional distribution of $U$ given $X$. It is relevant in economics because $Y$ may represent an individual's choice between two alternatives and $X'\beta-U$ is the difference in utility values.

This paper considers a setting where the goal is \textit{structural analysis} (i.e., learning about the threshold-crossing process). Structural analysis is useful because it enables the researcher to learn about certain features of the decision-making problem, generate more precise predictions relative to simply estimating the conditional mean of $Y$ given $X$ using nonparametric methods, and extrapolate beyond the observed covariate values. A necessary ingredient for structural analysis is the objective identifiability of the index coefficients $\beta$, and, to that end, I will maintain that the conditional distribution of $U$ given $X$ satisfies the conditional median restriction,
\begin{align}
Median(U|X) = 0,   \label{eq:median} 
\end{align}
a constraint first imposed in \cite{manski1975maximum,manski1985semiparametric} that is sometimes referred to as \textit{median independence} in the econometrics literature.\footnote{See \cite{manski1988identification} and \cite{horowitz2009semiparametric} for use of this terminology.} Frequentist estimators based on this location restriction are often called \textit{maximum score methods} (e.g., \cite{manski1975maximum,manski1985semiparametric,MANSKI198685,horowitz1992smoothed,khan2013distribution,komarova2013binary,jun2015classical,jun2017integrated,patra2018consistent,gao2020two}). For this reason, structural analysis under median independence will be called the \textit{maximum score problem}.

This paper offers a Bayesian perspective on the maximum score problem. A Bayesian approach requires specifying a prior for the index coefficients $\beta$ and the conditional distribution of $U$ given $X$. The latter is challenging because any prior for the conditional distribution of the error term $U$ given $X$ must be supported on a set of conditional distributions that obey median independence. Without modification, standard priors for conditional distributions (e.g., \cite{quintana2022dependent} and the references therein) do not impose the location normalization. Consequently, the resulting Bayes estimators do not have a structural interpretation (i.e., they can only be used for \textit{reduced-form analysis} where the goal is to estimate conditional response probabilities). An exception is \cite{newton1996bayesian}, a paper proposing Bayesian inference for a linear index threshold-crossing model using a median zero-constrained Dirichlet process prior. Their prior, however, assigns probability one to single index models for the conditional distribution of $Y$ given $X$ (i.e., models in which the conditional distribution of $Y$ given $X$ depends on the covariates through the linear index $X'\beta$ \textit{only}). This means their model is restrictive relative to the class of distributions accommodated by (\ref{eq:median}) because it does not allow for arbitrary heteroskedasticity. Unrestricted heteroskedasticity is important in economic applications due to the possibility of random coefficients in the utility function.

Rather than constraining a prior for the distribution of $U$ given $X$ to impose median independence, I work with a reparametrization of the threshold-crossing model that simultaneously enables structural analysis and unrestricted priors for the infinite-dimensional nuisance parameters. \cite{manski1988identification} and \cite{khan2013distribution} show that the threshold-crossing model with median independence is observationally equivalent to a probit model with nonparametric heteroskedasticity. That is, conditional distributions for $Y$ given $X$ compatible with (\ref{eq:threshold}) and (\ref{eq:median}) can be matched to those implied by the threshold-crossing model
\begin{align*}
Y = \begin{cases} 1 &\quad \text{if $X'\beta- \sigma(X)\varepsilon \geq 0$} \\
0 & \quad \text{if $X'\beta - \sigma(X)\varepsilon<0$}
\end{cases},
\end{align*}
where $\varepsilon$ follows a standard normal distribution and $x \mapsto \sigma(x)$ is an unknown positive function. This leads to a simple Bayesian inference procedure because the data augmentation strategy of \cite{albert1993bayesian} effectively reduces inference to that of a Gaussian linear regression model with unknown heteroskedasticity. Using this observation and writing $\sigma^{2}(X) = \exp(g(X))$, I employ techniques from \cite{chib2013conditional} to derive a Gibbs sampler in which a Gaussian process forms a conditionally conjugate prior for $g$.

There are at least two benefits associated with a Bayesian approach in this setting. The first is that inference is based on a full probability model for the conditional distribution of $Y$ given $X$. Consequently, \textit{any} inference question can be answered using the posterior. For example, a simulated sample from the posterior provides finite-sample Bayesian uncertainty quantification for the index function coefficients $\beta$ \textit{and} the model-implied choice probabilities $\Phi(X'\beta\exp(-g(X)/2))$, where $\Phi(\cdot)$ is the standard normal cumulative distribution. Operationally, these questions are addressed with simple manipulations of an empirical distribution (computing sample means, empirical quantiles, etc.). The second benefit is computational. For the class of priors recommended in this paper, every step of the Gibbs sampler has a closed form. This means the researcher has to sample from standard parametric distributions during implementation. In contrast, maximum score methods typically require maximizing a nonconcave objective function with many local extrema, and, without smoothing, these optimization problems may require mixed integer programming \citep{florios2008exact}. Quasi-Bayesian approaches to the maximum score problem (e.g., \cite{jun2015classical,jun2017integrated}) are easier to compute through the use of Markov Chain Monte Carlo methods, however, they do not enable choice probability estimation. Beyond computation, asymptotic frequentist inference is challenging due to nonstandard limiting distributions associated with unsmoothed maximum score estimators \citep{kim1990cube}, while inference based on smoothed maximum score estimators (e.g., \cite{horowitz1992smoothed}) can be sensitive to choice of bandwidth parameters.

The paper is organized as follows. Section \ref{sec:bayes} introduces the Bayesian inference procedure. Section \ref{sec:postsample} presents a Gibbs sampling algorithm for the model. Section \ref{sec:extensions} discusses some extensions. Section \ref{sec:sim} provides a small-scale simulation to illustrate some of the finite-sample properties of the method. Section \ref{sec:conclusion} concludes. Notation is introduced when appropriate and proofs are in the Appendix.
\section{Bayesian Inference}\label{sec:bayes}
\subsection{Sampling Model}
I start with the specification of a conditional sampling model. Let $\{x_{i}\}_{i=1}^{n}$ be a fixed set of covariates each taking values in $\mathcal{X}$, a subset of $\mathbb{R}^{d_{x}}$ with $d_{x} < \infty$.\footnote{Readers may view $\{x_{i}\}_{i \geq 1}$ as a fixed realization of a stochastic process $\{X_{i}\}_{i \geq 1}$ of covariates.} The outcome variables $(Y_{1},...,Y_{n})'$ are generated according to the linear index threshold-crossing model,
\begin{align*}
Y_{i} = \begin{cases}
1 &\quad \text{if $x_{i}'\beta - U_{i} \geq  0$} \\
0 &\quad \text{if $x_{i}'\beta - U_{i} < 0$} 
\end{cases}, \quad U_{i}|\beta, \{Q(\cdot|x): x \in \mathcal{X}\} \overset{ind}{\sim} Q(\cdot |x_{i}), \quad i=1,...,n,    
\end{align*}
where $\beta$ is a parameter vector that belongs to a subset $\mathcal{B}$ of $\mathbb{R}^{d}$, and $\{Q(\cdot|x): x \in \mathcal{X}\}$ is a class of conditional cumulative distribution functions belonging to some space $\mathcal{Q}_{\mathcal{X}}$ (i.e., $Q(t|x) = \Pr(U \leq t|X=x)$ for each $(t,x) \in \mathbb{R} \times \mathcal{X}$). This specifies a conditional sampling model $\mathcal{M}_{n,\mathcal{B}\times \mathcal{Q}_{\mathcal{X}}}$ for $(Y_{1},...,Y_{n})'$,
\begin{align*}
    \mathcal{M}_{n,\mathcal{B}\times \mathcal{Q}_{\mathcal{X}}} = \left \{\bigotimes_{i=1}^{n}Bin(1,Q(x_{i}'\beta|x_{i})): (\beta,\{Q(\cdot|x):x \in \mathcal{X}\}) \in \mathcal{B} \times \mathcal{Q}_{\mathcal{X}}\right\},
\end{align*}
where $Bin(k,p)$ denotes a binomial distribution with $k$ trials and success probability $p$.

\begin{assumption}\label{as:udensity}
Elements $\{Q(\cdot|x): x \in \mathcal{X}\}$ of $\mathcal{Q}_{\mathcal{X}}$ satisfy \begin{enumerate*} \item $t \mapsto Q(t|x)$ is continuous for each $x \in \mathcal{X}$, \item $t \mapsto Q(t|x)$ is strictly increasing for each $x \in \mathcal{X}$, and \item $\inf\{t \in \mathbb{R}: Q(t|x) \geq 0.5\} = 0$ for each $x \in \mathcal{X}$. \end{enumerate*}
\end{assumption}
\begin{assumption}\label{as:scalenormalization}
The parameter space $\mathcal{B}$ satisifes $\mathcal{B} = \Theta \times \{1\}$, where $\Theta$ is a subset of $\mathbb{R}^{d_{x}-1}$.
\end{assumption}
Assumptions \ref{as:udensity} and \ref{as:scalenormalization} restrict the joint parameter space $\mathcal{B} \times \mathcal{Q}_{\mathcal{X}}$. Assumption \ref{as:udensity} formalizes median independence because Part 3 enforces that the conditional distribution of $U$ given $X$ has median zero.\footnote{Parts 1 and 2 are regularity conditions that are helpful for the proof of Theorem \ref{thm:observational} below. Namely, they ensure that $Q(x'\beta|x) = 0.5$ if and only if $x'\beta = 0$, a property used to establish observational equivalence with the heteroskedastic probit model.} Median independence is a minimal location normalization for the error distribution that ensures nontrivial identification of the index coefficients $\beta$ \citep{manski1975maximum,manski1985semiparametric}.\footnote{In contrast, the mean independence restriction $E(U|X) = 0$ has no identifying power \citep{manski1988identification}.} For example, it accommodates heteroskedasticity of an unknown form. This is important in stochastic choice models, where $x_{i}'\beta-U_{i}$ represents utility differences between options (e.g., purchase a good or not), because heteroskedasticity may arise due to unobservable preference heterogeneity (i.e., random coefficients). Assumption \ref{as:scalenormalization} imposes that $\beta = (\theta',1)'$ for some $\theta \in \Theta \subseteq \mathbb{R}^{d_{x}-1}$. It is a scale normalization introduced because median independence places no restrictions on the scale of $\beta$. There are other approaches to scale normalization, for instance setting $\mathcal{B} = \{\beta \in \mathbb{R}^{d_{x}}: ||\beta||_{2} = 1\}$ a l\'{a} \cite{manski1985semiparametric}, however, simply setting the last element of $\beta$ equal to one is convenient for my paper.

Assumption \ref{as:udensity} implies that the linear index threshold-crossing model is observationally equivalent to a probit model with nonparametric heteroskedasticity. Let $z\mapsto \Phi(z)$ be the cumulative distribution function of the standard normal distribution $\mathcal{N}(0,1)$ and let $\mathcal{S}$ be the set of functions mapping from $\mathcal{X}$ to $\mathbb{R}_{++}$. The heteroskedastic probit model is
\begin{align*}
\mathcal{M}_{n,\mathcal{B}\times \mathcal{S}} = \left\{\bigotimes_{i=1}^{n}Bin\left(1,\Phi\left(\frac{x_{i}'\beta}{\sigma(x_{i})}\right)\right): (\beta,\sigma) \in \mathcal{B} \times \mathcal{S} \right\}.
\end{align*}
The next theorem is based on Theorem 2.1 of \cite{khan2013distribution}, an extension of a lemma on p. 737 of \cite{manski1988identification}. It formalizes the observational equivalence of the two binary choice models.
\begin{theorem}\label{thm:observational}
Assumption \ref{as:udensity} implies $\mathcal{M}_{n,\mathcal{B}\times \mathcal{Q}_{\mathcal{X}}} = \mathcal{M}_{n, \mathcal{B} \times \mathcal{S}}$.
\end{theorem}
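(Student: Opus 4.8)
The plan is to exploit that every element of $\mathcal{M}_{n,\mathcal{B}\times\mathcal{Q}_{\mathcal{X}}}$ and of $\mathcal{M}_{n,\mathcal{B}\times\mathcal{S}}$ is a product of independent Bernoulli laws, hence is completely determined by its vector of success probabilities $(p_{1},\dots,p_{n})$. So the claimed identity reduces to showing
\[
\{(Q(x_{i}'\beta\mid x_{i}))_{i=1}^{n}:(\beta,Q)\in\mathcal{B}\times\mathcal{Q}_{\mathcal{X}}\}=\{(\Phi(x_{i}'\beta/\sigma(x_{i})))_{i=1}^{n}:(\beta,\sigma)\in\mathcal{B}\times\mathcal{S}\}.
\]
Since $\beta$ ranges over the same set $\mathcal{B}$ on both sides, I would prove the two inclusions with $\beta$ held fixed, each time constructing the remaining nuisance parameter as a function on all of $\mathcal{X}$; only its values at $x_{1},\dots,x_{n}$ are actually constrained, so there is harmless freedom at the other covariate points.

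First I would record the one consequence of Assumption \ref{as:udensity} that everything rests on. Parts 1 and 2 make $t\mapsto Q(t\mid x)$ a continuous, strictly increasing bijection from $\mathbb{R}$ onto $(0,1)$ (the endpoints being forced by $Q(\cdot\mid x)$ being a cumulative distribution function), so $\Phi^{-1}(Q(t\mid x))$ is finite for every $t\in\mathbb{R}$; Part 3 together with continuity and monotonicity forces $Q(0\mid x)=1/2$, and hence $Q(t\mid x)>1/2$ for $t>0$ and $Q(t\mid x)<1/2$ for $t<0$. This is exactly the property flagged in the footnote to Assumption \ref{as:udensity}.

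The inclusion $\mathcal{M}_{n,\mathcal{B}\times\mathcal{S}}\subseteq\mathcal{M}_{n,\mathcal{B}\times\mathcal{Q}_{\mathcal{X}}}$ is the easy direction: given $(\beta,\sigma)$, set $Q(t\mid x):=\Phi(t/\sigma(x))$. Because $\sigma(x)>0$ and $\Phi$ is a continuous strictly increasing bijection $\mathbb{R}\to(0,1)$ with $\Phi(0)=1/2$, the family $\{Q(\cdot\mid x):x\in\mathcal{X}\}$ satisfies the conditions of Assumption \ref{as:udensity}, so it lies in $\mathcal{Q}_{\mathcal{X}}$, and it reproduces the success probabilities, $Q(x_{i}'\beta\mid x_{i})=\Phi(x_{i}'\beta/\sigma(x_{i}))$. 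For the converse $\mathcal{M}_{n,\mathcal{B}\times\mathcal{Q}_{\mathcal{X}}}\subseteq\mathcal{M}_{n,\mathcal{B}\times\mathcal{S}}$: given $(\beta,Q)$, define $\sigma(x):=x'\beta/\Phi^{-1}(Q(x'\beta\mid x))$ when $x'\beta\neq0$ and $\sigma(x):=1$ when $x'\beta=0$. By the structural fact above, when $x'\beta\neq0$ the number $\Phi^{-1}(Q(x'\beta\mid x))$ is finite, nonzero, and of the same sign as $x'\beta$, so $\sigma(x)\in\mathbb{R}_{++}$ and $\sigma\in\mathcal{S}$; and by construction $\Phi(x_{i}'\beta/\sigma(x_{i}))=Q(x_{i}'\beta\mid x_{i})$ for every $i$ (both sides being $1/2$ at any $i$ with $x_{i}'\beta=0$). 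Hence the two product laws agree.

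The only point that needs genuine care — and where I expect the real work to sit — is this last construction: checking that the candidate scale function is everywhere finite and strictly positive. That is precisely where Parts 1--2 of Assumption \ref{as:udensity} are used (to keep $Q(x'\beta\mid x)$ away from $0$ and $1$, so that $\Phi^{-1}$ is finite) and where Part 3, the median-zero normalization, is used (to pin down the sign of $\Phi^{-1}(Q(x'\beta\mid x))$). It is also worth noting that $\sigma$ is far from unique — its values at covariate points not among $x_{1},\dots,x_{n}$, and at points with $x_{i}'\beta=0$, are arbitrary — but this non-identification does not affect the asserted equality of the two model classes. The argument is an adaptation of the proof of Theorem 2.1 in \cite{khan2013distribution}.
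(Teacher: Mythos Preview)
Your proposal is correct and follows essentially the same route as the paper's proof: both directions use the identical constructions---$Q(t\mid x)=\Phi(t/\sigma(x))$ for $\mathcal{M}_{n,\mathcal{B}\times\mathcal{S}}\subseteq\mathcal{M}_{n,\mathcal{B}\times\mathcal{Q}_{\mathcal{X}}}$, and $\sigma(x)=x'\beta/\Phi^{-1}(Q(x'\beta\mid x))$ (with $\sigma(x)=1$ on the zero-index set) for the converse---and both verify positivity of $\sigma$ via the sign-matching consequence of Assumption~\ref{as:udensity}. Your presentation is slightly more explicit in first reducing to equality of success-probability vectors and in defining $\sigma$ on all of $\mathcal{X}$ rather than only at the sample points, but these are expository rather than substantive differences.
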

Theorem \ref{thm:observational} justifies using the heteroskedastic probit model for inference about the threshold-crossing model. Indeed, it states that any vector $(Y_{1},...,Y_{n})'$ drawn from a distribution in $\mathcal{M}_{n,\mathcal{B}\times \mathcal{Q}_{\mathcal{X}}}$ can also be drawn from a distribution in $\mathcal{M}_{n,\mathcal{B}\times \mathcal{S}}$ (and vice versa).
Since elements of $\mathcal{S}$ can be written as $\sigma = \exp(g/2)$ for $g = \log \sigma^{2}$, I base inference on the sampling model
\begin{align*}
\mathcal{M}_{n,\mathcal{B}\times \mathcal{G}} = \left\{  \bigotimes_{i=1}^{n}Bin\left(1,\Phi\left(x_{i}'\beta\exp\left(-\frac{1}{2}g(x_{i})\right)\right)\right): (\beta,g) \in \mathcal{B} \times \mathcal{G} \right\}.  
\end{align*}
This leads to the likelihood function
\begin{align*}
    L_{n}(\beta,g) = \prod_{i=1}^{n}\left\{\Phi\left(x_{i}'\beta\exp\left(-\frac{1}{2}g(x_{i})\right)\right)^{Y_{i}}\left(1-\Phi\left(x_{i}'\beta\exp\left(-\frac{1}{2}g(x_{i})\right)\right)\right)^{1-Y_{i}}\right\}
\end{align*}
for each $(\beta,g) \in \mathcal{B} \times \mathcal{G}$. Notice that the nuisance parameter $g$ is unrestricted in $\mathcal{M}_{n,\mathcal{B}\times \mathcal{G}}$ (i.e., it is just a mapping from $\mathcal{X}$ to $\mathbb{R}$). This contrasts with the original parametrization where the nuisance parameter is restricted to satisfy median independence. It is practically useful because the researcher can avoid semiparametric likelihood estimation over a constrained parameter space. \cite{khan2013distribution} studies frequentist estimation of $\mathcal{M}_{n,\mathcal{B} \times \mathcal{G}}$, approximating the nuisance parameter space $\mathcal{G}$ using a finite-dimensional linear sieve.
\subsection{Prior and Posterior}
I adopt a Bayesian approach and equip the parameter space $\mathcal{B} \times \mathcal{G}$ with a prior $\Pi$. Specifically, I assume $\beta \sim \Pi_{\mathcal{B}}$, $g \sim \Pi_{\mathcal{G}}$, and $\beta \independent g$ so that the joint distribution $\Pi$ of $(\beta,g)$ is the product $\Pi_{\mathcal{B}} \times \Pi_{\mathcal{G}}$.  The marginal distribution $\Pi_{\mathcal{B}}$ can be virtually any (improper) probability distribution over $\mathcal{B}$, however, I specify $\Pi_{\mathcal{G}}$ as the probability law of a mean-zero Gaussian process $\{g(x):x \in \mathcal{X}\}$. That is, for any finite collection $x_{1},...,x_{k}$ of inputs in $\mathcal{X}$ with $k \geq 1$, the vector of function values $(g(x_{1}),...,g(x_{k}))'$ follows a mean-zero Gaussian distribution with covariance matrix $(\kappa(x_{j},x_{l}))_{j,l=1}^{k}$, where $\kappa: \mathcal{X} \times \mathcal{X} \rightarrow \mathbb{R}$ is a positive semidefinite kernel known as a \textit{covariance function}. This choice primarily reflects computation (see Section \ref{sec:postsample}), however, it should be noted that Gaussian processes have become a default choice of nonparametric prior for unrestricted functions (see \cite{williams2006gaussian} and the references therein). In Section \ref{sec:sim}, I assume that $g$ is drawn from a \textit{Mat\'{e}rn Gaussian process}, a stationary Gaussian process with a covariance function hyperparameter that can accommodate any order of differentiability \citep{stein1999interpolation}.

The prior and the sampling model leads to a conditional distribution for $(\beta,g)$ given $\{Y_{i}\}_{i=1}^{n}$, known as a \textit{posterior}, that can be used to answer any inference question involving the pair $(\beta,g)$. Applying Bayes theorem, the posterior distribution $\Pi_{n}((\beta,g) \in \cdot |\{Y_{i}\}_{i=1}^{n})$ for $(\beta,g)$ satisfies
\begin{align*}
d \Pi_{n}(\beta,g|\{Y_{i}\}_{i=1}^{n}) \propto L_{n}(\beta,g)d \Pi(\beta,g).   
\end{align*}
Point estimates for the elements of $\beta$ can be chosen as elementwise posterior medians, uncertainty quantification for functions $f(\beta,g)$ of $(\beta,g)$ can be obtained using a \textit{credible set}, a set estimator $CS_{n,f}(1-\alpha)$ that satisfies the coverage requirement
\begin{align*}
\Pi_{n}(f(\beta,g) \in CS_{n,f}(1-\alpha)|\{Y_{i}\}_{i=1}^{n}) = 1- \alpha
\end{align*}
for some nominal level $\alpha \in (0,1)$, and predictions can made using the posterior predictive distribution
\begin{align*}
    P_{n,\Pi}(Y=1|X=x_{*}) = \int \Phi\left(x_{*}'\beta\exp\left(-\frac{1}{2}g(x_{*})\right)\right)d \Pi(\beta,g|\{Y_{i}\}_{i=1}^{n})
\end{align*}
for some covariate value $x^{*} \in \mathcal{X}$. In practice, these integrals are replaced with sample averages based on a simulated posterior sample $\{(\beta_{s},g_{s})\}_{s=1}^{S}$. The takeaway is powerful: a researcher with a sample $\{(\beta_{s},g_{s})\}_{s=1}^{S}$ can answer \textit{any} inference question that arises in the structural analysis of the binary choice model. I now turn to the important issue of sampling from the posterior.

\section{Sampling from the Posterior}\label{sec:postsample}
\subsection{Overview of the Gibbs Sampler}
I propose Gibbs sampler for the parameters of the heteroskedastic probit model. The starting point is to expand the parameter from $(\beta,g)$ to $(\beta,g,Z_{1},...,Z_{n})$ using that $Y_{i} = \mathbf{1}\{Z_{i} \geq 0\}$, $i=1,...,n$, where $Z_{i}|\beta,g \overset{ind}{\sim} \mathcal{N}(x_{i}'\beta,\exp(g(x_{i})))$ for each $i=1,...,n$. This leads to the joint posterior
\begin{align}\label{eq:joint}
&d\Pi_{n}(\beta,g,Z_{1},...,Z_{n}|\{Y_{i}\}_{i=1}^{n}) \nonumber \\
&\quad \propto \prod_{i=1}^{n}\left\{\mathbf{1}\{Z_{i} \geq 0\}^{Y_{i}}\mathbf{1}\{Z_{i}<0\}^{1-Y_{i}}\phi(Z_{i}|x_{i}'\beta,\exp(g(x_{i})))\right\}d\Pi(\beta,g),  
\end{align}
where $z \mapsto \phi(z|\mu,\tau^{2})$ is the probability density function of the normal distribution $\mathcal{N}(\mu,\tau^{2})$. Gibbs sampling then proceeds by sequentially updating $\beta$, $g$, and $Z_{1},...,Z_{n}$. That is,
\begin{enumerate}
    \item Sample $\{Z_{i}\}_{i=1}^{n}$ from the conditional posterior $\Pi_{n}((Z_{1},...,Z_{n}) \in \cdot | \{Y_{i}\}_{i=1}^{n},\beta,g)$.
    \item Sample $\beta$ from the conditional posterior $\Pi_{n}(\beta \in \cdot | \{Y_{i}\}_{i=1}^{n},\{Z_{i}\}_{i=1}^{n},g)$.
    \item Sample $g$ from the conditional posterior $\Pi_{n}(g \in \cdot | \{Y_{i}\}_{i=1}^{n},\{Z_{i}\}_{i=1}^{n},\beta)$.
\end{enumerate}This data augmentation strategy, due to \cite{albert1993bayesian}, is useful because, given $\{Z_{i}\}_{i=1}^{n}$, the problem of learning $\beta$ and $g$ as a Gaussian linear model with nonparametric heteroskedasticity. Relative to \cite{albert1993bayesian}, the challenge is sampling from the conditional posterior of $g$ given $\beta$ and $\{Z_{i}\}_{i=1}^{n}$ as the first two steps are identical to the parametric case except with transformed covariates $x_{i}\exp(-g(x_{i})/2)$.

Since data augmentation reduces Bayesian estimation of the heteroskedastic probit model to a Gaussian linear regression model with nonparametric heteroskedasticity, I adopt the approach from \cite{chib2013conditional} to sample from the conditional posterior for $g$. Let 
\begin{align*}
T(Z_{i},x_{i},\beta) = \log ((Z_{i}-x_{i}'\beta)^{2}), \quad i=1,...,n    
\end{align*}
for $i=1,...,n$. The latent Gaussian linear regression model can be transformed as
\begin{align*}
    T(Z_{i},x_{i},\beta) = g(x_{i}) + \nu_{i}, \quad  \nu_{i} | \beta,g \overset{ind}{\sim} \log \chi^{2}_{1}, \quad i=1,...,n.
\end{align*}
A Gaussian process is not a conjugate prior for a regression model with log chi-squared errors. Fortunately, the log chi-squared distribution is well-approximated by a \textit{known} location-scale mixture of ten Gaussian distributions \citep{kim1998stochastic,omori2007stochastic}. Consequently, I update the unknown function $g$ using the approximating model
\begin{align*}
    T(Z_{i},x_{i},\beta) = g(x_{i}) + \nu_{i}^{*}, \quad i=1,...,n,
\end{align*}
where
\begin{align*}
    \nu_{i}^{*}|\beta,g \overset{ind}{\sim} \sum_{j=1}^{10}\omega_{0,j}\mathcal{N}(\mu_{0,j},\tau_{0,j}^{2}), \quad i=1,...,n
\end{align*}
with $(\omega_{0,1},...,\omega_{0,10})'$ being a vector of probability weights, $(\mu_{0,1},...,\mu_{0,10})'$ being a vector of means, and $(\tau_{0,1}^{2},...,\tau_{0,10}^{2})'$ being a vector of variance parameters. The mixture parameters are \textit{known} because the log chi-squared distribution does not depend on any unknown parameters.\footnote{See Table 1 of \cite{omori2007stochastic} for the Gaussian mixture parameters.} Recognizing that the mixture model admits the hierarchical representation,
\begin{align*}
    \nu_{i}^{*}|\beta,g, \{A_{i}\}_{i=1}^{n} \overset{ind}{\sim} \mathcal{N}(\mu_{0,A_{i}}, \tau_{0,A_{i}}^{2}), \quad A_{i}|\beta,g \overset{ind}{\sim} Categorical(\{\omega_{0,1},...,\omega_{0,10}\}), \quad i=1,...,n,
\end{align*}
where $Categorical(\{p_{1},...,p_{10}\})$ generically denotes a discrete probability distribution on $\{1,...,10\}$ with probability weights $\{p_{1},...,p_{10}\}$, Bayesian updating of the heteroskedasticity parameter can be performed using the data-augmented conditional posterior
\begin{align}
    & d\Pi_{n}(g,A_{1},...,A_{n}|\{Y_{i}\}_{i=1}^{n},\beta,\{Z_{i}\}_{i=1}^{n})  \nonumber\\
    &\quad \propto \prod_{i=1}^{n}\prod_{j=1}^{10}\left\{\omega_{0,j}^{\mathbf{1}\{A_{i}=j\}}\phi(T(Z_{i},x_{i},\beta)|\mu_{0,j}+g(x_{i}),\tau_{0,j}^{2})^{\mathbf{1}\{A_{i}=j\}}\right\}d \Pi_{\mathcal{G}}(g). \label{eq:mixtureconditional}
\end{align}
This adds considerable tractability to the method because the mixture representation implies that the Gaussian process forms a conjugate prior, and, as a result, Bayesian updating of $(g(x_{1}),...,g(x_{n}))'$ requires sampling from a multivariate Gaussian distribution. Of course, this benefit follows from approximation. An inspection of the right panel of Figure 1 in \cite{omori2007stochastic} shows that it is an excellent approximation, and, for that reason, I am not concerned about this discrepancy.

\subsection{Steps of the Gibbs Sampler}
This section describes a single iteration of a Gibbs sampler. That is, obtaining a new parameter vector $(\beta_{s+1},g_{s+1},\{Z_{i,s+1}\}_{i=1}^{n},\{A_{i,s+1}\}_{i=1}^{n})$ given a current parameter $(\beta_{s},g_{s},\{Z_{i,s}\}_{i=1}^{n},\{A_{i,s}\}_{i=1}^{n})$. Iterating this process until $s > S$ for $S \geq 1$ generates a sample $\{(\beta_{s},g_{s},\{Z_{i,s}\}_{i=1}^{n},\{A_{i,s}\}_{i=1}^{n})\}_{s=1}^{S}$ from the posterior.

\subsubsection{Step 1: Updating the Latent Variables.}\label{sec:latent} The first stage is updating the latent variables $\{Z_{i}\}_{i=1}^{n}$ while holding $\beta,g$, and $\{A_{i}\}_{i=1}^{n}$ fixed at their current value. From (\ref{eq:joint}), the conditional posterior 
\begin{align*}
\Pi_{n}\left((Z_{1},...,Z_{n}) \in \cdot \middle |\{Y_{i}\}_{i=1}^{n},\beta,g\right)     
\end{align*}
for the latent variables $\{Z_{i}\}_{i=1}^{n}$ satisfies
\begin{align*}
d\Pi_{n}\left(Z_{1},...,Z_{n}\middle|\{Y_{i}\}_{i=1}^{n},\beta,g\right) \propto \prod_{i=1}^{n}\mathbf{1}\{Z_{i} > 0\}^{Y_{i}}\mathbf{1}\{Z_{i} \leq 0\}^{(1-Y_{i})}\phi\left(Z_{i}|x_{i}'\beta,\exp(g(x_{i}))\right),    
\end{align*}
which is the density kernel of a product of $n$ independent truncated normal distributions. This implies that the vector of latent variables $(Z_{1,s+1},...,Z_{n,s+1})$ is sampled according to the rule
\begin{align*}
    Z_{i,s+1}|\{Y_{i}\}_{i=1}^{n},\beta_{s},g_{s}, \{A_{i,s}\}_{i=1}^{n} \overset{ind}{\sim}  \begin{cases}
    \mathcal{N}_{[0,\infty)}\left(x_{i}'\beta_{s},\exp(g_{s}(x_{i}))\right) \quad &\text{if $Y_{i}= 1$} \\
    \mathcal{N}_{(-\infty,0)}\left(x_{i}'\beta_{s},\exp(g_{s}(x_{i})\right) \quad &\text{if $Y_{i} = 0$} 
    \end{cases}, \quad i=1,...,n,
\end{align*}
where $\mathcal{N}_{C}(\mu,\tau^{2})$ denotes a normal distribution $\mathcal{N}(\mu,\tau^{2})$ truncated to a set $C$.
\subsubsection{Step 2: Updating the Index Coefficients.}
The second stage is updating the index coefficients $\beta$ while holding $g$, $\{Z_{i}\}_{i=1}^{n}$, and $\{A_{i}\}_{i=1}^{n}$ fixed. From (\ref{eq:joint}), the conditional posterior 
\begin{align*}
\Pi_{n}(\beta \in \cdot |\{Y_{i}\}_{i=1}^{n},g,\{Z_{i}\}_{i=1}^{n})    
\end{align*}
satisfies 
\begin{align*}
d\Pi_{n}(\beta|\{Y_{i}\}_{i=1}^{n},g,\{Z_{i}\}_{i=1}^{n}) \propto \prod_{i=1}^{n}\phi(Z_{i}|x_{i}'\beta,\exp(g(x_{i})))d\Pi_{\mathcal{B}}(\beta).    
\end{align*} 
This is Bayesian updating for a linear regression model with an unknown slope and known heteroskedasticity, a problem for which there exist priors in which it is simple to sample from the posterior. For example, an (improper) prior proportional to the product of the Lebesgue measure on $\Theta$ and the point mass at $1$ (due to scale normalization) means that samples from the conditional posterior of $\beta = (\theta',1)'$ are drawn according to the rule 
\begin{align*}
\theta_{s+1} | \{Y_{i}\}_{i=1}^{n},\{Z_{i,s+1}\}_{i=1}^{n},g_{s}, \{A_{i,s}\}_{i=1}^{n} \sim \mathcal{N}_{\Theta}(\hat{\theta}_{n}(g_{s}),\hat{V}_{n}(g_{s})),    
\end{align*}
where
\begin{align*}
    \hat{\theta}_{n}(g_{s}) = \left(\sum_{i=1}^{n}\exp(-g_{s}(x_{i}))x_{i,-d_{x}}x_{i,-d_{x}}'\right)^{-1}\sum_{i=1}^{n}\exp(-g_{s}(x_{i}))x_{i,-d_{x}}(Z_{i,s+1}-x_{i,d_{x}})
\end{align*}
and
\begin{align*}
    \hat{V}_{n}(g_{s}) = \left(\sum_{i=1}^{n}\exp(-g_{s}(x_{i}))x_{i,-d_{x}}x_{i,-d_{x}}'\right)^{-1}.
\end{align*}
with $x_{i,-d_{x}}$ being the subvector of $x_{i}$ that excludes the last element $x_{i,d_{x}}$. This step and that of Section \ref{sec:latent} are identical to the data-augmentation method for parametric probit models in \cite{albert1993bayesian}.
\subsubsection{Step 3: Updating Mixture Assignments}
The third stage is updating the mixture assignments $\{A_{i}\}_{i=1}^{n}$ while holding $\beta$, $g$, and $\{Z_{i}\}_{i=1}^{n}$ fixed at their current values. The conditional posterior for the latent mixture assignments $(A_{1},...,A_{n})$ implied by (\ref{eq:mixtureconditional}) is
\begin{align*}
    d\Pi_{n}(A_{1},...,A_{n}|\{Y_{i}\}_{i=1}^{n},\beta,\{Z_{i}\}_{i=1}^{n},g) \propto \prod_{i=1}^{n}\prod_{j=1}^{10}\left\{\omega_{0,j}^{\mathbf{1}\{A_{i}=j\}}\phi(T(Z_{i},x_{i},\beta)|\mu_{0,j}+g(x_{i}),\tau_{0,j}^{2})^{\mathbf{1}\{A_{i}=j\}}\right\},
\end{align*}
which implies that
\begin{align*}
    A_{i,s+1}|\{Y_{i}\}_{i=1}^{n},\beta_{s+1},\{Z_{i,s+1}\}_{i=1}^{n},g_{s} \overset{ind}{\sim} Categorical\left(\left\{\frac{\omega_{0,j}\phi(T(Z_{i,s+1},x_{i},\beta_{s+1})|\mu_{0,j}+g_{s}(x_{i}),\tau_{0,j}^{2})}{\sum_{k=1}^{10}\omega_{0,k}\phi(T(Z_{i,s+1},x_{i},\beta_{s+1})|\mu_{0,k}+g_{s}(x_{i}),\tau_{0,k}^{2})}\right\}_{j=1}^{10}\right)
\end{align*}
for $i=1,...,n$, where $Categorical(\{p_{j}\}_{j=1}^{10})$ denotes a discrete probability distribution on $\{1,...,10\}$ with probability weights $\{p_{1},...,p_{10}\}$.
\subsubsection{Step 4: Updating the Logarithm of the Conditional Variance}\label{sec:updateg}
The final stage is updating the logarithm of the conditional variance $(g(x_{1}),...,g(x_{n}))'$. From (\ref{eq:mixtureconditional}), the conditional posterior for $(g(x_{1}),...,g(x_{n}))'$ is given by
\begin{align*}
    &d\Pi_{n}((g(x_{1}),...,g(x_{n}))|\{Y_{i}\}_{i=1}^{n},\beta,\{Z_{i}\}_{i=1}^{n},\{A_{i}\}_{i=1}^{n})  \nonumber\\
    &\quad \propto \prod_{i=1}^{n}\left\{\phi(T(Z_{i},x_{i},\beta)|\mu_{0,A_{i}}+g(x_{i}),\tau_{0,A_{i}}^{2})\right\}d \Pi_{\mathcal{G}}(g).
\end{align*}
Since $\Pi_{\mathcal{G}}$ is the law of a mean-zero Gaussian process, the conditional posterior for $(g(x_{1}),...,g(x_{n}))'$ implied by (\ref{eq:mixtureconditional}) is
\begin{align*}
    (g(x_{1}),...,g(x_{n}))'|\{Y_{i}\}_{i=1}^{n},\beta,\{Z_{i}\}_{i=1}^{n},\{A_{i}\}_{i=1}^{n} \sim \mathcal{N}(m_{n}(\beta,\{A_{i}\}_{i=1}^{n}), V_{n}(\{A_{i}\}_{i=1}^{n})),
\end{align*}
where 
\begin{align*}
m_{n}(\beta,\{A_{i}\}_{i=1}^{n}) = K_{n}(K_{n}+\Sigma_{n})^{-1}(T_{n}(\beta)-\mu_{0,n})
\end{align*}
and
\begin{align*}
    V_{n}(\{A_{i}\}_{i=1}^{n}) = K_{n} - K_{n}(K_{n}+\Sigma_{n})^{-1}K_{n}
\end{align*}
with $K_{n} = (\kappa(x_{i},x_{j}))_{i,j=1}^{n}$, $\Sigma_{n} = \text{diag}(\sigma_{0,A_{1}}^{2},...,\sigma_{0,A_{n}}^{2})$, $T_{n}(\beta) = (T(Z_{1},x_{1},\beta),...,T(Z_{n},x_{n},\beta))'$, and $\mu_{0,n} = (\mu_{0,A_{1}},...,\mu_{0,A_{n}})'$. These formulae arise from the conjugacy of Gaussian process priors regression models with normally distributed error terms \citep{williams2006gaussian}. Consequently,
\begin{align*}
    (g_{s+1}(x_{1}),...,g_{s+1}(x_{n}))'|\{Y_{i}\}_{i=1}^{n},\beta_{s+1},\{Z_{i,s+1}\}_{i=1}^{n},\{A_{i,s+1}\}_{i=1}^{n} \sim \mathcal{N}(m_{n}(\beta_{s+1},\{A_{i,s+1}\}_{i=1}^{n}), V_{n}(\{A_{i,s+1}\}_{i=1}^{n})).
\end{align*}
As such, the nuisance parameter updating only requires sampling from a multivariate Gaussian distribution. A caveat, however, is that the inversion of $K_{n} + \Sigma_{n}$ typically scales at $O(n^{3})$. Fortunately, Gaussian process methods are regularly encountered in machine learning applications in which the sample $n$ is very large, which in turn has led to the development of computational methods for large datasets (see Chapter 10 of \cite{williams2006gaussian} for an introduction).
\section{Extensions}\label{sec:extensions}
\subsection{Discrete Covariates}
Some computational simplifications can arise when some of the elements of $X$ are discrete. Suppose that $x=(x_{D}',x_{C}')'$ with $x_{D} \in \mathcal{X}_{D}$ denoting the discrete variables and let $x_{C} \in \mathcal{X}_{C}$ denoting the continuous variables. Moreover, let $\mathcal{X}_{D} = \{x_{1,D},....,x_{R,D}\}$ with $R < \infty$. In such case, $g$ can be written as
\begin{align*}
    g(x) = \sum_{r=1}^{R}g(x_{r,D},x_{C})\mathbf{1}\{x_{D} = x_{r,D}\}.
\end{align*}
Consequently, for any $i=1,...,n$, the log variance $g(x_{i})$ admits the additive representation
\begin{align*}
    g(x_{i}) = \sum_{r=1}^{R}g_{r}(x_{i,C})\mathbf{1}\{x_{i,D}= x_{r,D}\},
\end{align*}
meaning that the last step of the Gibbs sampler (i.e., that described in Section \ref{sec:updateg}) may be applied groupwise. This can be computationally advantageous. To see why, suppose that $g_{r} \overset{ind}{\sim} \Pi_{\mathcal{G}_{r}}$, $r=1,...,R$, where $\Pi_{\mathcal{G}_{r}}$ is the law of a Gaussian process on $\mathcal{X}_{C}$ and let $\mathcal{I}_{r} \subseteq \{1,...,n\}$ be such that $x_{i,D} = x_{r,D}$ for all $i \in \mathcal{I}_{r}$. From (\ref{eq:mixtureconditional}), the conditional posterior of the log variance $g$ is
\begin{align*}
    &d\Pi_{n}((g(x_{1}),...,g(x_{n}))|\{Y_{i}\}_{i=1}^{n},\beta,\{Z_{i}\}_{i=1}^{n},\{A_{i}\}_{i=1}^{n})  \nonumber\\
    &\quad \propto \prod_{r=1}^{R}\left\{\prod_{i \in \mathcal{I}_{r}}\phi(T(Z_{i},(x_{r,D}',x_{i,C}')',\beta)|\mu_{0,A_{i}}+g_{r}(x_{i,C}),\tau_{0,A_{i}}^{2})d \Pi_{\mathcal{G}_{r}}(g_{r})\right\}.
    \end{align*}
Consequently, for $\beta$, $\{A_{i}\}_{i=1}^{n}$, and $\{Z_{i}\}_{i=1}^{n}$ fixed, the functions $g_{1},...,g_{R}$ are independent \textit{under the posterior}. This is amenable to computation because 1. the inversion of the group level covariance matrices $K_{n_{r}}+\Sigma_{n_{r}}$ is $O(n_{r}^{3})$, where $K_{n_{r}} = (\kappa_{r}(x_{i},x_{l}))_{i,l \in \mathcal{I}_{r}}$, $\Sigma_{n_{r}} = \text{diag}(\sigma_{0,A_{i}}^{2}: i \in \mathcal{I}_{r})$, and $n_{r} = |\mathcal{I}_{r}|$ (as opposed to $O(n^{3})$ when dealing with $K_{n}+\Sigma_{n}$), and 2. Step 4 (i.e., Section \ref{sec:updateg}) of the Gibbs sampler can be performed in parallel across $r$.
\subsection{Predicition}
Another reason for undertaking a structural analysis is \textit{prediction}. That is, determine a future outcome $y^{*}$ given a covariate value $x_{*}$. Following Section 3.3 of \cite{manski1988identification}, I cast prediction as a decision problem in which the utility from action $a \in [0,1]$ in state $y^{*} \in \{0,1\}$ is $u(a,y^{*})=-|y^{*}-a|$. The goal is selecting the action $a \in [0,1]$ that maximizes utility. Since $y^{*}$ is unknown, a Bayesian reports the action that maximizes expected utility under the \textit{posterior predictive distribution},
\begin{align*}
 P_{n,\Pi}(Y=1|X=x_{*})  =   \int \Phi\left(x_{*}'\beta\exp\left(-\frac{1}{2}g(x_{*})\right)\right) d \Pi(\beta,g|\{Y_{i}\}_{i=1}^{n}).
\end{align*}
For $u(a,y^{*}) = -|y^{*}-a|$, it can be shown that the optimal action $a^{*}(\{Y_{i}\}_{i=1}^{n})$ solves
\begin{align*}
    \min_{a \in [0,1]}\left\{|1-a|P_{n,\Pi}(Y=1|X=x_{*}) + |a|(1-P_{n,\Pi}(Y=1|X=x_{*}))\right\}
\end{align*}
from which it can be seen that
\begin{align*}
    a^{*}\left(\{Y_{i}\}_{i=1}^{n}\right) = 
    \begin{cases} 
    1 \quad & \text{if $P_{n,\Pi}(Y=1|X=x_{*}) \geq 1/2$} \\
    0 \quad & \text{if $P_{n,\Pi}(Y=1|X=x_{*}) < 1/2$}
    \end{cases}
\end{align*}
In practice, a researcher with a simulated sample $\{(\beta_{s}',g_{s}(x_{*}))'\}_{s=1}^{S}$ from the posterior computes
\begin{align*}
    \frac{1}{S}\sum_{s=1}^{S} \Phi\left(x_{*}'\beta_{s}\exp\left(-\frac{1}{2}g_{s}(x_{*})\right)\right)
\end{align*}
and reports $y_{S}^{*} = 1$ if this quantity is greater than or equal to $1/2$ and reports $y_{S}^{*} = 0$ if this quantity is less than $1/2$. This differs from the frequentist decision rule from \cite{manski1988identification} in which $a = \mathbf{1}\{x_{*}'\beta \geq 0\}$.

Since the Gibbs sampler in Section \ref{sec:postsample} only samples from the posterior for $(g(x_{1}),...,g(x_{n}))'$, it is inadequate for out-of-sample predictions (i.e., $x_{*} \notin \{x_{i}\}_{i =1}^{n}$). For extrapolation, I expand the parameter to
\begin{align*}
\left(\beta',\{Z_{i}\}_{i=1}^{n},\{A_{i}\}_{i=1}^{n},g(x_{1}),...,g(x_{n}),g(x_{*})\right)'     
\end{align*}
and update in blocks $\beta$, $\{Z_{i}\}_{i=1}^{n}$, $\{A_{i}\}_{i=1}^{n}$, $\{g(x_{i})\}_{i=1}^{n}$, and $g(x_{*})$. This requires that the step of the Gibbs sampler from Section \ref{sec:updateg} be modified and that an additional step is introduced.
\subsubsection{Modifying Section \ref{sec:updateg}}
Since $(g(x_{1}),...,g(x_{n}))'$ and $g(x_{*})$ are correlated under the Gaussian process prior $\Pi_{\mathcal{G}}$, Bayesian updating of $(g(x_{1}),...,g(x_{n}))'$ given $\beta$, $\{Z_{i}\}_{i=1}^{n}$, $\{A_{i}\}_{i=1}^{n}$, and $g(x_{*})$ uses the conditional distribution $(g(x_{1}),...,g(x_{n}))'$ given $g(x_{*})$ under $\Pi_{\mathcal{G}}$ as the prior. Let $\kappa_{*} = \kappa(x_{*},x_{*})$, $\kappa_{n,*} = (\kappa(x_{1},x_{*}),...,\kappa(x_{n},x_{*}))'$, and $K_{n} = (\kappa(x_{i},x_{j}))_{i,j=1}^{n}$. The conditional distribution of $(g(x_{1}),...,g(x_{n}))'$ given $g(x_{*})$ under $\Pi_{\mathcal{G}}$ is 
\begin{align*}
\mathcal{N}(\kappa_{n,*}\kappa(x_{*},x_{*})^{-1}g(x_{*}),K_{n}- \kappa(x_{*},x_{*})^{-1}\kappa_{n,*}\kappa_{n,*}').    
\end{align*} 
Letting 
\begin{align*}
m_{n,*} = \kappa_{n,*}\kappa(x_{*},x_{*})^{-1}g(x_{*})
\end{align*}
and
\begin{align*}
K_{n,*} = K_{n}- \kappa(x_{*},x_{*})^{-1}\kappa_{n,*}\kappa_{n,*}',    
\end{align*}
it follows from standard properties of multivariate Gaussian distributions that
\begin{align*}
    (g(x_{1}),...,g(x_{n}))'|\{Y_{i}\}_{i=1}^{n},\beta,\{Z_{i}\}_{i=1}^{n},\{A_{i}\}_{i=1}^{n},g(x_{*}) \sim \mathcal{N}(m_{n,*}(\beta,\{A_{i}\}_{i=1}^{n}), V_{n,*}(\{A_{i}\}_{i=1}^{n})),
\end{align*}
where
\begin{align*}
m_{n,*}(\beta,\{A_{i}\}_{i=1}^{n}) = \Sigma_{n}\left(K_{n,*} + \Sigma_{n}\right)^{-1}m_{n,*} + K_{n,*}\left(K_{n,*} + \Sigma_{n}\right)^{-1}(T_{n}(\beta)-\mu_{0,n})
\end{align*}
and
\begin{align*}
 V_{n,*}(\{A_{i}\}_{i=1}^{n}) = K_{n,*} - K_{n,*}(K_{n,*}+ \Sigma_{n})^{-1}K_{n,*}.   
\end{align*}
Consequently, the stage of the Gibbs sampler described in Section \ref{sec:updateg} is modified to
\begin{align*}
  &(g_{s+1}(x_{1}),...,g_{s+1}(x_{n}))'|\{Y_{i}\}_{i=1}^{n},\beta_{s+1},\{Z_{i,s+1}\}_{i=1}^{n},\{A_{i,s+1}\}_{i=1}^{n},g_{s}(x_{*}) \\
  &\quad \sim \mathcal{N}(m_{n,*}(\beta_{s+1},\{A_{i,s+1}\}_{i=1}^{n}), V_{n,*}(\{A_{i,s+1}\}_{i=1}^{n})).
  \end{align*}
This is similar to the step outlined in Section \ref{sec:updateg}, however, the mean $m_{n,*}(\beta,\{A_{i}\}_{i=1}^{n}) $ and variance $ V_{n,*}(\{A_{i}\}_{i=1}^{n})$ are modified to account for the fact $g(x_{*})$ is correlated with $(g(x_{1}),...,g(x_{n}))'$ under $\Pi_{\mathcal{G}}$.
\subsubsection{New Step}
The new step is updating $g(x_{*})$ while holding $\beta$, $\{Z_{i}\}_{i =1}^{n}$, $\{A_{i}\}_{i=1}^{n}$, and $\{g(x_{i})\}_{i=1}^{n}$ fixed. Since $g(x_{*})$ does not enter the likelihood function, it follows that
\begin{align*}
    d\Pi(g(x_{*})|\{Y_{i}\}_{i=1}^{n},\beta,\{Z_{i}\}_{i=1}^{n},\{A_{i}\}_{i=1}^{n},\{g(x_{i})\}_{i=1}^{n}) \propto d \Pi_{\mathcal{G}}(g(x_{*})|g(x_{1}),...,g(x_{n}))
\end{align*}
Consequently, standard properties of multivariate Gaussian distributions imply
\begin{align*}
    g(x_{*})|\{Y_{i}\}_{i=1}^{n},\beta,\{Z_{i}\}_{i=1}^{n},\{A_{i}\}_{i=1}^{n},\{g(x_{i})\}_{i=1}^{n} \sim \mathcal{N}(\kappa_{n,*}'K_{n}^{-1}g_{n},\kappa(x_{*},x_{*})-\kappa_{n,*}'K_{n}^{-1}\kappa_{n,*}),
\end{align*}
where $g_{n} = (g(x_{1}),...,g(x_{n}))'$. This implies that sampling the out-of-sample log conditional variance $g_{s+1}(x_{*})$ is performed using the following rule
\begin{align*}
  &g_{s+1}(x_{*})|\{Y_{i}\}_{i=1}^{n},\beta_{s+1},\{Z_{i,s+1}\}_{i=1}^{n},\{A_{i,s+1}\}_{i=1}^{n},\{g_{s+1}(x_{i})\}_{i=1}^{n} \\
  &\quad \sim \mathcal{N}(\kappa_{n,*}'K_{n}^{-1}g_{n,s+1},\kappa(x_{*},x_{*})-\kappa_{n,*}'K_{n}^{-1}\kappa_{n,*}).
\end{align*}
Intuitively, the sample path properties of the Gaussian process enable information from the collection $\{g(x_{i})\}_{i=1}^{n}$ to inform $g(x_{*})$ (i.e., there is an extrapolation via smoothness).
\section{Simulation}\label{sec:sim}
I present a small-scale simulation study to illustrate some finite-sample properties of the posterior.
\paragraph{Data Generating Process.} The simulation data-generating process is based on a design from \cite{horowitz1992smoothed}. Specifically, data is generated according to the rule $$Y_{i} = \mathbf{1}\{X_{i1}+\theta X_{i2} \geq U_{i}\}, \quad i=1,...,n,$$
where $\theta = 1$, $X_{i1} \overset{iid}{\sim} \mathcal{N}(0,1)$, $X_{i2} \overset{iid}{\sim} \mathcal{N}(1,1)$, and $$U_{i} = 0.25(1+2(X_{i1}+X_{i2})^{2}+(X_{i1}+X_{i2})^{4})V_{i}$$ with $V_{i}|\{(X_{i1},X_{i2})\}_{i=1}^{n}\overset{ind}{\sim} logistic$ with median zero and variance equal to one. I generate one thousand independent datasets of size $n$ according to this data-generating process for $n \in \{250,500\}$.
\paragraph{Prior.} The prior for $\theta$ is an improper prior proportional to the Lebesgue measure on $\mathbb{R}$, $g$ is assumed to be drawn from a centered Mat\'{e}rn Gaussian process, and $\theta \independent g$ under the prior. The Mat\'{e}rn Gaussian process has the defining property that, for any realized covariates $\{x_{i}\}_{i=1}^{n}$, the vector of function values $(g(x_{1}),...,g(x_{n}))'$ follows a Gaussian distribution
\begin{align*}
(g(x_{1}),...,g(x_{n}))' \sim \mathcal{N}(0,K_{n}),    
\end{align*} 
where the covariance matrix $K_{n} = (\kappa(x_{i},x_{j}))_{i,j=1}^{n}$ has elements 
\begin{align*}
\kappa(x_{i},x_{j}) = \frac{2^{1-\alpha}}{\Gamma(\alpha)}\left(\frac{\sqrt{2\alpha}||x_{i}-x_{j}||_{2}}{l}\right)K_{\alpha}\left(\frac{\sqrt{2\alpha}||x_{i}-x_{j}||_{2}}{l}\right),
\end{align*}
for $i,j \in \{1,...,n\}$. Here, $\alpha > 0$ is a smoothness parameter, $l > 0$ is a length-scale parameter, $\Gamma(\cdot)$ is the gamma function, and $K_{\alpha}$ is a modified Bessel function of the second kind of order $\alpha$. In the simulation, I set $l = 1$ and choose $\alpha \in \{1/2,3/2,5/2,7/2\}$ to evaluate how prior smoothness affects the procedure. Figure \ref{fig:matern} illustrates how $\alpha$ governs the smoothness properties of the process by plotting sample paths of a Mat\'{e}rn process defined over the interval $[-3,3]$ for regularity parameter $\alpha \in \{1/2,3/2,5/2,7/2\}$. More precisely, it can be shown that the sample paths $g$ are $\lfloor \alpha \rfloor$-differentiable in mean-square, the Mat\'{e}rn Gaussian process with $\alpha = 1/2$ and one-dimensional $\mathcal{X}$ is the Ornstein-Uhlenbeck (OU) process, and the $\alpha \rightarrow \infty$ limit is the Gaussian process with the squared exponential covariance function, a process with analytic sample paths.\footnote{See Section 3.1 of \cite{van2011information} for a formal treatment.}
\begin{figure}[h]
\includegraphics[width=\textwidth]{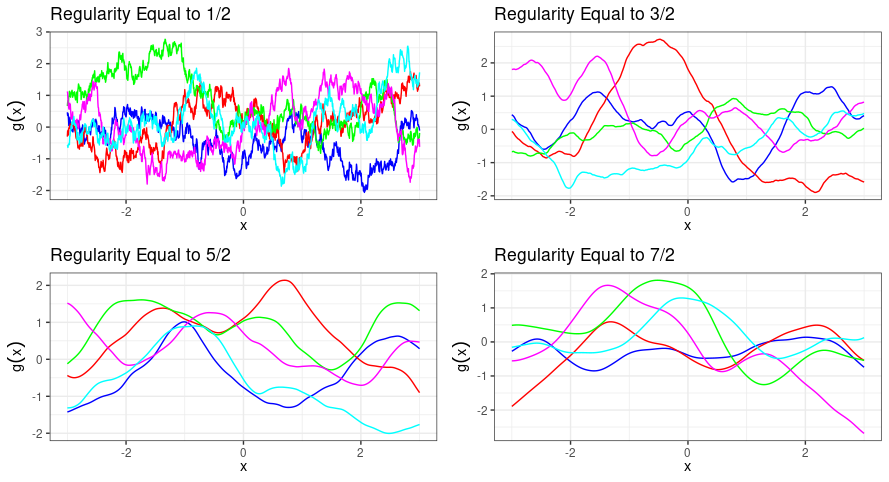}
\caption{Sample Paths of the Mat\'{e}rn Process for Different Values of $\alpha$.}\label{fig:matern}
\end{figure}

\paragraph{Results.} Table \ref{tab:rmse} reports the mean square error of the posterior median for $\theta$, the \textit{frequentist} coverage of the 95\% equitailed probability interval for $\theta$, and the average length of the interval estimates. The equitailed probability interval is a credible interval with lower and upper bounds given by the $0.025$ and $0.975$ empirical quantiles of the simulated posterior sample for $\theta$. The results have several takeaways. First, the posterior median becomes accurate across all smoothness levels as the sample size increases. Second, the performance of the posterior median seems to be relatively unaffected by the prior when it is assumed that $g$ has at least one derivative (i.e., $\alpha \in \{3/2,5/2,7/2\}$), however, the posterior median based on a prior that is only mean-square continuous (i.e., $\alpha = 1/2$) is less accurate than its smooth counterparts. The comparison of $\alpha = 1/2$ with $\alpha > 1/2$ parallels the relative performance of the smoothed and unsmoothed maximum score estimators (see columns 3, 5, 7, and row `H' of Table I of \cite{horowitz1992smoothed}). Third, the equitailed probability interval for $\theta$ undercovers when $n=250$ with the extent of the coverage distortion depending on the smoothness of the prior. Frequentist size/coverage distortions of similar magnitude are documented for the \cite{horowitz1992smoothed} smoothed maximum score estimator (see Columns 2 and 4 of Table II in \cite{horowitz1992smoothed}).  Finally, the equitailed credible intervals have frequentist coverage close to the nominal level of 95\% for this data-generating process when $n=500$, however, the average length of the credible intervals based on smooth priors (i.e. $\alpha \in \{3/2,5/2,7/2\}$) result in credible intervals with shorter length on average.
\begin{table}[h]
\centering
\begin{threeparttable}
\begin{tabular}{lrrrrr}
\hline
                                  &  &  & \multicolumn{2}{c}{\underline{Sample Size}}   \\
                               & \text{Regularity}  &  &$n=250$ & $n=500$ \\
                     \hline
                      &                     $\alpha = 1/2$ &    &  0.033     & 0.007       \\
\multirow{2}{*}{Mean Square Error} &                     $\alpha = 3/2$ &    &  0.023     &    0.005     \\
 & $\alpha =5/2$ &  &   0.021       &   0.005      \\
&                      $\alpha =7/2$ &   &    0.020    &    0.005  \\
 &             &         &         &         &         \\
  &                     $\alpha =1/2$ &  &  85.4\%       &  95.0\%      \\
\multirow{2}{*}{Coverage (\%)} &                     $\alpha =3/2$ &  &   90.7\%      &  95.7\%       \\
 & $\alpha =5/2$  & &   90.8\%      &  95.9\%       \\
&                      $\alpha =7/2$ &  &   91.7\%     &  95.9\%    \\
 &             &         &         &         &         \\
  &                     $\alpha =1/2$ &  &  0.5150      &   0.3133     \\
\multirow{2}{*}{Average Length} &                     $\alpha =3/2$ &  &   0.4853     &   0.2933      \\
 & $\alpha =5/2$  & &    0.4790     &  0.2887      \\
&                      $\alpha =7/2$ &  &  0.4750      &   0.2873   \\

                      \hline
\end{tabular}
\caption{Simulation Results}\label{tab:rmse}
\begin{tablenotes}
      \small
      \item \textit{Notes:} Results based on 10,000 iterations of the Gibbs sampler with the first 5,000 discarded as burn-in. Point estimates used in mean square error calculations are posterior median, equitailed credible intervals have nominal level equal to 95\%, and the true $\theta$ is equal to one.
    \end{tablenotes}
\end{threeparttable}
\end{table}
\section{Conclusion}\label{sec:conclusion}
This paper presents a Bayesian inference procedure for the maximum score problem. The key idea is that a semiparametric binary choice model subject to a median independence restriction is observationally equivalent to a probit model with nonparametric heteroskedasticity. Consequently, a computationally attractive Bayesian inference procedure can be derived from Gibbs sampling techniques developed in \cite{albert1993bayesian} and \cite{chib2013conditional}. Establishing formal asymptotic properties of the posterior for the index coefficients $\beta$ (i.e., posterior consistency, rates of convergence, and frequentist uncertainty quantification guarantees) is work in progress.
    \bibliography{NPB}

\begin{thebibliography}{23}
\providecommand{\natexlab}[1]{#1}
\providecommand{\url}[1]{\texttt{#1}}
\expandafter\ifx\csname urlstyle\endcsname\relax
  \providecommand{\doi}[1]{doi: #1}\else
  \providecommand{\doi}{doi: \begingroup \urlstyle{rm}\Url}\fi

\bibitem[Albert and Chib(1993)]{albert1993bayesian}
J.~H. Albert and S.~Chib.
\newblock Bayesian analysis of binary and polychotomous response data.
\newblock \emph{Journal of the American statistical Association}, 88\penalty0
  (422):\penalty0 669--679, 1993.

\bibitem[Chib and Greenberg(2013)]{chib2013conditional}
S.~Chib and E.~Greenberg.
\newblock On conditional variance estimation in nonparametric regression.
\newblock \emph{Statistics and Computing}, 23:\penalty0 261--270, 2013.

\bibitem[Florios and Skouras(2008)]{florios2008exact}
K.~Florios and S.~Skouras.
\newblock Exact computation of max weighted score estimators.
\newblock \emph{Journal of Econometrics}, 146\penalty0 (1):\penalty0 86--91,
  2008.
\newblock ISSN 0304-4076.
\newblock \doi{https://doi.org/10.1016/j.jeconom.2008.05.018}.
\newblock URL
  \url{https://www.sciencedirect.com/science/article/pii/S0304407608000778}.

\bibitem[Gao and Xu(2020)]{gao2020two}
W.~Y. Gao and S.~Xu.
\newblock Two-stage maximum score estimator.
\newblock \emph{arXiv preprint arXiv:2009.02854}, 2020.

\bibitem[Horowitz(1992)]{horowitz1992smoothed}
J.~L. Horowitz.
\newblock A smoothed maximum score estimator for the binary response model.
\newblock \emph{Econometrica: journal of the Econometric Society}, pages
  505--531, 1992.

\bibitem[Horowitz(2009)]{horowitz2009semiparametric}
J.~L. Horowitz.
\newblock \emph{Semiparametric and nonparametric methods in econometrics},
  volume~12.
\newblock Springer, 2009.

\bibitem[Jun et~al.(2015)Jun, Pinkse, and Wan]{jun2015classical}
S.~J. Jun, J.~Pinkse, and Y.~Wan.
\newblock Classical laplace estimation for n3-consistent estimators: Improved
  convergence rates and rate-adaptive inference.
\newblock \emph{Journal of Econometrics}, 187\penalty0 (1):\penalty0 201--216,
  2015.
\newblock ISSN 0304-4076.
\newblock \doi{https://doi.org/10.1016/j.jeconom.2015.01.005}.
\newblock URL
  \url{https://www.sciencedirect.com/science/article/pii/S0304407615000147}.

\bibitem[Jun et~al.(2017)Jun, Pinkse, and Wan]{jun2017integrated}
S.~J. Jun, J.~Pinkse, and Y.~Wan.
\newblock Integrated score estimation.
\newblock \emph{Econometric Theory}, 33\penalty0 (6):\penalty0 pp. 1418--1456,
  2017.
\newblock ISSN 02664666, 14694360.
\newblock URL \url{https://www.jstor.org/stable/26613545}.

\bibitem[Khan(2013)]{khan2013distribution}
S.~Khan.
\newblock Distribution free estimation of heteroskedastic binary response
  models using probit/logit criterion functions.
\newblock \emph{Journal of Econometrics}, 172\penalty0 (1):\penalty0 168--182,
  2013.

\bibitem[Kim and Pollard(1990)]{kim1990cube}
J.~Kim and D.~Pollard.
\newblock {Cube Root Asymptotics}.
\newblock \emph{The Annals of Statistics}, 18\penalty0 (1):\penalty0 191 --
  219, 1990.
\newblock \doi{10.1214/aos/1176347498}.
\newblock URL \url{https://doi.org/10.1214/aos/1176347498}.

\bibitem[Kim et~al.(1998)Kim, Shephard, and Chib]{kim1998stochastic}
S.~Kim, N.~Shephard, and S.~Chib.
\newblock Stochastic volatility: Likelihood inference and comparison with arch
  models.
\newblock \emph{The Review of Economic Studies}, 65\penalty0 (3):\penalty0
  361--393, 1998.
\newblock ISSN 00346527, 1467937X.
\newblock URL \url{http://www.jstor.org/stable/2566931}.

\bibitem[Komarova(2013)]{komarova2013binary}
T.~Komarova.
\newblock Binary choice models with discrete regressors: Identification and
  misspecification.
\newblock \emph{Journal of Econometrics}, 177\penalty0 (1):\penalty0 14--33,
  2013.

\bibitem[Manski(1975)]{manski1975maximum}
C.~F. Manski.
\newblock Maximum score estimation of the stochastic utility model of choice.
\newblock \emph{Journal of econometrics}, 3\penalty0 (3):\penalty0 205--228,
  1975.

\bibitem[Manski(1985)]{manski1985semiparametric}
C.~F. Manski.
\newblock Semiparametric analysis of discrete response: Asymptotic properties
  of the maximum score estimator.
\newblock \emph{Journal of econometrics}, 27\penalty0 (3):\penalty0 313--333,
  1985.

\bibitem[Manski(1988)]{manski1988identification}
C.~F. Manski.
\newblock Identification of binary response models.
\newblock \emph{Journal of the American statistical Association}, 83\penalty0
  (403):\penalty0 729--738, 1988.

\bibitem[Manski and Thompson(1986)]{MANSKI198685}
C.~F. Manski and T.~Thompson.
\newblock Operational characteristics of maximum score estimation.
\newblock \emph{Journal of Econometrics}, 32\penalty0 (1):\penalty0 85--108,
  1986.
\newblock ISSN 0304-4076.
\newblock \doi{https://doi.org/10.1016/0304-4076(86)90013-8}.
\newblock URL
  \url{https://www.sciencedirect.com/science/article/pii/0304407686900138}.

\bibitem[Newton et~al.(1996)Newton, Czado, and Chappell]{newton1996bayesian}
M.~A. Newton, C.~Czado, and R.~Chappell.
\newblock Bayesian inference for semiparametric binary regression.
\newblock \emph{Journal of the American Statistical Association}, 91\penalty0
  (433):\penalty0 142--153, 1996.

\bibitem[Omori et~al.(2007)Omori, Chib, Shephard, and
  Nakajima]{omori2007stochastic}
Y.~Omori, S.~Chib, N.~Shephard, and J.~Nakajima.
\newblock Stochastic volatility with leverage: Fast and efficient likelihood
  inference.
\newblock \emph{Journal of Econometrics}, 140\penalty0 (2):\penalty0 425--449,
  2007.

\bibitem[Patra et~al.(2018)Patra, Seijo, and Sen]{patra2018consistent}
R.~K. Patra, E.~Seijo, and B.~Sen.
\newblock A consistent bootstrap procedure for the maximum score estimator.
\newblock \emph{Journal of Econometrics}, 205\penalty0 (2):\penalty0 488--507,
  2018.

\bibitem[Quintana et~al.(2022)Quintana, M{\"u}ller, Jara, and
  MacEachern]{quintana2022dependent}
F.~A. Quintana, P.~M{\"u}ller, A.~Jara, and S.~N. MacEachern.
\newblock The dependent dirichlet process and related models.
\newblock \emph{Statistical Science}, 37\penalty0 (1):\penalty0 24--41, 2022.

\bibitem[Rasmussen and Williams(2006)]{williams2006gaussian}
C.~E. Rasmussen and C.~K. Williams.
\newblock \emph{Gaussian processes for machine learning}, volume~2.
\newblock MIT press Cambridge, MA, 2006.

\bibitem[Stein(1999)]{stein1999interpolation}
M.~Stein.
\newblock \emph{Interpolation of Spatial Data: Some Theory for Kriging}.
\newblock Springer Series in Statistics. Springer New York, 1999.
\newblock ISBN 9781461214946.
\newblock URL \url{https://books.google.com/books?id=aZXwBwAAQBAJ}.

\bibitem[van~der Vaart and van Zanten(2011)]{van2011information}
A.~van~der Vaart and H.~van Zanten.
\newblock Information rates of nonparametric gaussian process methods.
\newblock \emph{Journal of Machine Learning Research}, 12\penalty0
  (60):\penalty0 2095--2119, 2011.
\newblock URL \url{http://jmlr.org/papers/v12/vandervaart11a.html}.

\end{thebibliography}
    \bibliographystyle{abbrvnat}
    \begin{appendices}
    \section{Proofs}
    \begin{proof}[Proof of Theorem \ref{thm:observational}]
The proof has two steps. In Step 1, I show that $\mathcal{M}_{n,\mathcal{B}\times \mathcal{M}}\subseteq \mathcal{M}_{n,\mathcal{B}\times \mathcal{S}}$. In Step 2, I show that $\mathcal{M}_{n,\mathcal{S}} \subseteq \mathcal{M}_{n,\mathcal{B}\times \mathcal{Q}_{\mathcal{X}}}$. By double inclusion, these two steps establish the claim.
   \paragraph{Step 1.} Let $P_{Y|X}^{(n)}$ be an arbitrary element $\mathcal{M}_{n,\mathcal{B} \times \mathcal{Q}_{\mathcal{X}}}$. Let $i \in \{1,...,n\}$ be fixed arbitrarily. If $x_{i}'\beta \neq 0$, then $Q(x_{i}'\beta|x_{i}) = \Phi\left(x_{i}'\beta/\sigma(x_{i})\right)$ with $\sigma(x_{i}) > 0$ given by $\sigma(x_{i}) = x_{i}'\beta/\Phi^{-1}(Q(x_{i}'\beta|x_{i}))$. The positivity claim holds because Assumption \ref{as:udensity} implies $\text{sign}(\Phi^{-1}(Q(x_{i}'\beta|x_{i}))) = \text{sign}(x_{i}'\beta)$. If $x_{i}'\beta = 0$, then $Q(x_{i}'\beta|x_{i}) = \Phi(x_{i}'\beta/\sigma(x_{i}))$ with $\sigma(x_{i}) = 1$. This follows from Assumption \ref{as:udensity} and $\Phi(0) = 0.5$. Repeating this argument for all $i \in \{1,...,n\}$, it follows that there exists $\tilde{P}_{Y|X}^{(n)} \in \mathcal{M}_{n,\mathcal{B}\times \mathcal{S}}$ such that $P_{Y|X}^{(n)} = \tilde{P}_{Y|X}^{(n)}$. Since I fixed $P_{Y|X}^{(n)} \in \mathcal{M}_{n,\mathcal{B} \times \mathcal{Q}_{\mathcal{X}}}$ arbitrarily, it follows that $\mathcal{M}_{n,\mathcal{B}\times \mathcal{M}}\subseteq \mathcal{M}_{n,\mathcal{B}\times \mathcal{S}}$.
   \paragraph{Step 2.} Let $\tilde{P}_{Y|X}^{(n)} \in \mathcal{M}_{n,\mathcal{B}\times \mathcal{S}}$ be fixed arbitrarily. The random vector $(\tilde{U}_{1},...,\tilde{U}_{n})$ with $\tilde{U}_{i} = \tilde{\varepsilon}_{i}/\sigma(x_{i})$ with $\tilde{\varepsilon}_{i}|\beta,\sigma \overset{ind}{\sim} \mathcal{N}(0,1)$ for $i=1,...,n$ and $\sigma: \mathcal{X} \rightarrow \mathbb{R}_{++}$ satisfies Assumption \ref{as:udensity}. Consequently, it follows that $\tilde{P}_{Y|X}^{(n)} \in \mathcal{M}_{n,\mathcal{B}\times \mathcal{Q}_{\mathcal{X}}}$. Since I fixed $\tilde{P}_{Y|X}^{(n)} \in \mathcal{M}_{n,\mathcal{B}\times \mathcal{S}}$ arbitrarily, it follows that $\mathcal{M}_{n,\mathcal{B}\times \mathcal{S}} \subseteq \mathcal{M}_{n,\mathcal{B}\times \mathcal{Q}_{\mathcal{X}}}$.
\end{proof}
\end{appendices}
\end{document}